\theoremstyle{plain}
\newtheorem{Th}{Theorem}[section]
\newtheorem{Cor}[Th]{Corollary}
\newtheorem{Prop}[Th]{Proposition}
\theoremstyle{definition}
\newtheorem{Def}{Definition}[section]
\newtheorem{Ex}{Example}[section]
\theoremstyle{remark}
\newtheorem*{Rem}{Remark}
\numberwithin{equation}{section}
\newcommand{\DD}{{\mathbb D}}
\newcommand{\ZZ}{{\mathbb Z}}
\newcommand{\VV}{{\mathbb V}}
\newcommand{\bpsi}{\boldsymbol{\psi}}
\begin{document}

\title
{Non-commutative Hermite--Pad\'{e} approximation\\ and integrability}

\author{Adam Doliwa}

\address{Faculty of Mathematics and Computer Science\\
University of Warmia and Mazury in Olsztyn\\
ul.~S{\l}oneczna~54\\ 10-710~Olsztyn\\ Poland} 
\email{doliwa@matman.uwm.edu.pl}

%
\date{}
\keywords{Hermite--Pad\'{e} approximation, non-commutative rational approximation, non-commutative Hirota system, orthogonal polynomials, discrete integrable equations, non-commutative Toda system}
\subjclass[2010]{39A60, 41A21, 15A15, 37N30, 37K60, 65Q30, 42C05}

\begin{abstract}
We introduce and solve the non-commutative version of the Hermite--Pad\'{e} type I approximation problem. Its solution, expressed by quasideterminants, leads in a natural way to a subclass of solutions of the non-commutative Hirota (discrete Kadomtsev--Petviashvili) system and of its linear problem. We also prove integrability of the constrained system, which in the simplest case is the non-commutative discrete-time Toda lattice equation known from the theory of non-commutative Pad\'{e} approximants and matrix orthogonal polynomials. 
\end{abstract}
\maketitle

\section{Introduction}
Hermite--Pad\'{e} approximation technique, originally introduced by Hermite to prove transcendency of Euler's constant~\cite{Hermite,Hermite-P}, has attracted recently considerable attention in mathematical physics due to close connection to multiply orthogonal polynomials, random matrices, diffusion models or various combinatorial problems~\cite{Aptekarev,BakerGraves-Morris,Kuijlaars,BleherKuijlaars,AptekarevKuijlaars,VanAsche}. An important ingredient behind the success of these theories was the integrability of resulting equations~\cite{AdlervanMoerbeke,AptekarevDerevyaginVanAssche,Alvarez-FernandezPrietoManas,AFACGAMM,Filipuk-VanAssche-Zhang,LopezLagomasinoMedinaPeraltaSzmigielski,ManoTsuda,Miranian,SinapVanAssche}. Recently a direct link between the Hermite--Pad\'{e} approximants and integrability has been found~\cite{Doliwa-Siemaszko-HP} in terms of the corresponding reduction of Hirota's discrete Kadomtsev--Peviashvili (KP) system, which in the simplest case of the Pad\'{e} rational approximation gives the discrete-time Toda lattice equations. Actually, it turned out that the relevant difference equations were known in the numerical algorithms community~\cite{Paszkowski,DD-DC-2} including their special solutions in terms of certain determinants.

Hirota's discrete KP system plays a special role within the theory of integrable equations and their applications. As it was shown by Miwa~\cite{Miwa}, it encodes the full KP hierarchy~\cite{DKJM} of integrable partial differential equations. It is well known, see for example reviews~\cite{KNS-rev,Zabrodin}, that majority of the known integrable systems can be obtained as its reductions. Moreover, the most important techniques used to find solutions of integrable equations can be applied to the Hirota system in their pure forms (the finite gap algebro-geometric method~\cite{BBEIM} in~\cite{Krichever,Shiota}, the  non-local $\bar{\partial}$-dressing method~\cite{AblowitzBarYaacovFokas,Konopelchenko-book} in~\cite{Dol-Des}, the Darboux transform~\cite{Matveev} in~\cite{Nimmo-KP}). It has simple geometric meaning in terms of geometric configurations~\cite{Dol-Des}, and its multidimensional consistency is encoded in the (fundamental to projective geometry) Desargues configuration. The symmetry structure of the equations is described by affine Weyl groups of A-type~\cite{Dol-AN}.

In the present paper we transfer the basic elements of the connection between the Hermite--Pad\'{e} approximants and integrability to the non-commutative level. Searching for non-commutative ge\-ne\-ra\-li\-za\-tions is well motivated by physics, but also by theoretical computer science and combinatorics. 
An important technical tool used in in such transition is provided by quasideterminants~\cite{Quasideterminants-GR1,Quasideterminants,KrobLeclerc}, which in a sense replace the standard determinants (better to say their ratios) in the non-commutative linear algebra. In fact, the fundamental properties of the non-commutative Pad\'{e} approximants~\cite{Draux-rev}, including also their connection to non-commutative/matrix orthogonal polynomials~\cite{Draux-OP-PA}, can be formulated in terms of quasideterminants~\cite{NCSF} see also~\cite{Miranian,SinapVanAssche,AFACGAMM,Shi-HaoLi,Doliwa-Siemaszko-W}. We follow this idea in replacing by quasideterminants the ratios of determinants in the corresponding formulas of~\cite{Doliwa-Siemaszko-HP}, where the link between the Hermite--Pad\'{e} approximants and integrability has been established in the commutative case.

The fully non-commutative Hirota system (originally called the non-Abelian Hirota--Miwa system) was proposed in~\cite{Nimmo-NCKP} where the corresponding Darboux transform was given as well in terms of quasideterminants, see also~\cite{LiNimmoTamizhmani}. 	
The class of solutions proposed below is of different nature. The reduction of the non-commutative Hirota system given in the present work differs also substantially from the known periodic reductions~\cite{Dol-GD,DoliwaNoumi}. 
Other application of the quasideterminants to non-commutative discrete integrable systems can be found in~\cite{DiFrancescoKedem,Doliwa-NCCF}.

The structure of paper is as follows. In Section~\ref{sec:preliminaries} we first present the relevant information on the non-commutative Hirota system, and then we recall basic properties of quasideterminants. Section~\ref{sec:NCHP} is devoted to formulation of the non-commutative Hermite--Pad\'{e} approximation problem and to presentation of its solution in terms of quasideterminants. We also establish there a link with the non-commutative Hirota system and its linear problem. Then in Section~\ref{sec:NCDT} we show that the relevant solutions of the system satisfy an additional integrable constraint. We also give arguments why the reduced system can be called non-commutative multidimensional Toda system. At the end of the paper we summarize its results, present open questions and  future research directions. 

\section{Preliminaries} \label{sec:preliminaries}
\subsection{The non-Abelian Hirota--Miwa system}
Consider the following linear problem \cite{DJM-II,Nimmo-NCKP}
\begin{equation} \label{eq:lin-d-adj-KP}
\bpsi(n-e_i) - \bpsi(n-e_j) =  \bpsi(n) U_{ij}(n),  \qquad 1\leq i \ne j \leq N,
\end{equation}
where $U_{ij}\colon\ZZ^N \to \DD$ are functions defined on $N$-dimensional integer lattice, $N\geq 3$, with values in a division ring $\DD$, and the wave function $\bpsi\colon \ZZ^N \to \VV(\DD)$ takes values in a right vector space over $\DD$; here  $n = (n_1, \dots , n_N) = \sum_{i=1}^N n_i e_i \in \ZZ^N$, and $e_i$ is the element of the standard basis in $\ZZ^N$ lattice.

The compatibility conditions of \eqref{eq:lin-d-adj-KP} consist of equations 
\begin{equation} \label{eq:alg-comp-U}
\begin{split}
U_{ij}(n) + U_{ji}(n) = 0, \quad  U_{ij}(n) & + U_{jk}(n) + U_{ki}(n) = 0,  \\ 
U_{ij}(n)U_{ik}(n-e_j) = & U_{ik}(n) U_{ij}(n-e_k),
\end{split} \qquad i,j,k \quad \text{distinct},
\end{equation}
called in \cite{Nimmo-NCKP} non-Abelian Hirota--Miwa system. 
From the last part of compatibility conditions one can deduce existence of the potentials $\rho^j\colon \ZZ^N \to \DD$ such that
\begin{equation} \label{eq:U-rho}
U_{ij}(n) = - \rho^i (n) \left[ \rho^i(n-e_j) \right]^{-1} ,
\end{equation}
which are given up to arbitrary functions of single variables.
Apart from the constraint 
\begin{equation} \label{eq:rho-rho}
\rho^j (n) \left[ \rho^j(n-e_i) \right]^{-1} + \rho^i (n) \left[ \rho^i(n-e_j) \right]^{-1} =0 ,
\end{equation}
which is the first part out of the system \eqref{eq:alg-comp-U}, the second part gives 
\begin{equation} \label{eq:rho-rho-rho}
\rho^i (n) \left[ \rho^i(n-e_j) \right]^{-1} + \rho^j (n) \left[ \rho^j(n-e_k) \right]^{-1} + \rho^k (n) \left[ \rho^k(n-e_i) \right]^{-1} =0.
\end{equation}
\begin{Rem}
For commutative $\DD$ it is possible to introduce the single potential function $\tau(n)$ such that
\begin{equation} \label{eq:U-tau}
U_{ij}(n) = \frac{\tau(n) \tau(n-e_i-e_j)}{\tau(n-e_i) \tau(n-e_j)}, \qquad i< j.
\end{equation}
Then the remaining (second) part of the system
\eqref{eq:alg-comp-U} reduces to Hirota's discrete KP equation \cite{Hirota,Miwa}
\begin{equation} \label{eq:H-M}
\tau(n-e_i)\tau(n-e_j-e_k) - \tau(n-e_j)\tau(n-e_i-e_k) + \tau(n-e_k)\tau(n-e_i-e_j) =0,
\end{equation}
where $1\leq i< j <k \leq N$. 
\end{Rem}
\begin{Rem}
In order to adjust to results of Section~\ref{sec:NCHP}, instead of the original formulation of the non-Abelian Hirota--Miwa system and of its linear problem  we use their dual (called also adjoint) versions~\cite{GilsonNimmoOhta,LiNimmoTamizhmani}. The original linear problem~\cite{Nimmo-NCKP} is obtained by replacing the shifts into negative directions by positive direction shifts.
\end{Rem}

\subsection{Quasideterminants}

In this Section we recall, following~\cite{Quasideterminants-GR1},  the definition and basic properties of quasideterminants. 
\begin{Def}
	Given square matrix $X=(x_{ij})_{i,j=1,\dots,n}$ with formal entries $x_{ij}$. In the free division ring~\cite{Cohn} generated by the set $\{ x_{ij}\}_{i,j=1,\dots,n}$ consider the formal inverse matrix $Y=X^{-1}= (y_{ij})_{i,j=1,\dots,n}$ to $X$.
	The $(i,j)$th quasideterminant $|X|_{ij}$ of $X$ is the inverse $(y_{ji})^{-1}$ of the $(j,i)$th element of $Y$, and is often written explicitly as
	\begin{equation}
	|X|_{ij} = \left| \begin{matrix}
	x_{11} & \cdots & x_{1j} & \cdots & x_{1n} \\
	\vdots &        & \vdots &        & \vdots \\
	x_{i1} & \cdots & \boxed{x_{ij}} & \cdots & x_{in}  \\
	\vdots &        & \vdots &        & \vdots \\
	x_{n1} & \cdots & x_{nj} & \cdots & x_{nn} 
	\end{matrix} \right| .
	\end{equation}	
\end{Def}
Quasideterminants can be computed using the following recurrence relation. For $n\geq 2$ let $X^{ij}$ be the square matrix obtained from $X$ by deleting the $i$th row and the $j$th column (with index $i/j$ skipped from the row/column enumeration), then
\begin{equation} \label{eq:QD-exp}
|X|_{ij} = 
x_{ij} - \sum_{\substack{ i^\prime \neq i \\ j^\prime \neq j }} x_{i j^\prime} |X^{ij}|_{i^\prime j^\prime }^{-1} x_{i^\prime j}
\end{equation}
provided all terms in the right-hand side are defined.
\begin{Rem}
	When the elements of the matrix $X$ commute between themselves, what we denote by placing the letter $c$ over the equality sign, then the familiar matrix inversion formula gives
\begin{equation} \label{eq:qdet-det}
|X|_{ij} \stackrel{c}{=} (-1)^{i+j}\frac{\det X}{\det X^{ij}}.
\end{equation}
\end{Rem}
\begin{Ex}
Quasideterminants of generic $2\times 2$ matrix
\begin{equation*}
X = \left( \begin{matrix}
x_{11} & x_{12} \\
x_{21} & x_{22}
\end{matrix}\right)
\end{equation*}	
read as follows
\begin{gather*}
|X|_{11} = x_{11} - x_{12} x_{22}^{-1} x_{21}, \quad 
|X|_{12} = x_{12} - x_{11} x_{21}^{-1} x_{22}, \\ 
|X|_{21} = x_{21} - x_{22} x_{12}^{-1} x_{11}, \quad 
|X|_{22} = x_{22} - x_{21} x_{11}^{-1} x_{12}.
\end{gather*}
\end{Ex}

Let us collect basic properties of the quasideterminants which will be used in the sequel; see also~\cite{KrobLeclerc}.
\subsubsection{Row and column operations}
\begin{itemize}
	\item 
	The quasideterminant $|X|_{ij}$ does not depend on permutations of rows and columns in the matrix $X$ that do not involve the $i$th row and the $j$th column.	
	\item Let the matrix $\tilde{X}$ be obtained from the matrix $X$ by multiplying the $k$th row by the element $\lambda$ of the division ring from the left, then 
	\begin{equation}
	|\tilde{X}|_{ij} = \begin{cases} \lambda|X|_{ij} & \text{if} \quad i = k, \\
	|X|_{ij} & \text{if} \quad i\neq k \quad \text{and} \; \lambda \; \text{is invertible} . \end{cases}
	\end{equation}
	
	\item Let the matrix $\hat{X}$ be obtained from the matrix $X$ by multiplying the $k$th column by the element $\mu$ of the division ring from the right, then 
	\begin{equation}
	|\hat{X}|_{ij} = \begin{cases} |X|_{ik} \, \mu & \text{if} \quad j = k, \\
	|X|_{ij} & \text{if} \quad j\neq k \quad \text{and} \; \mu \; \text{is invertible} . \end{cases}
	\end{equation}
	
	\item
	Let the matrix $\tilde{X}$ be constructed by adding to some row of the matrix $X$ its $k$th row multiplied by a scalar $\lambda$ from the left, then
	\begin{equation}
	|X|_{ij} = |\tilde{X}|_{ij}, \qquad i = 1, \dots , k-1, k+1, \dots , n, \quad j=1,\dots , n.
	\end{equation}
	
\item Let the matrix $\hat{X}$ be constructed by addition to some column of the matrix $X$ its $l$th column multiplied by a scalar $\mu$ from the right, then
	\begin{equation}
	|X|_{ij} = |\hat{X}|_{ij}, \qquad i = 1, \dots , n, \quad j=1,\dots , l-1, l+1 , \dots ,n.
	\end{equation}
	
\end{itemize}
\subsubsection{Homological relations}
\begin{itemize}
	\item Row homological relations:
	\begin{equation} \label{eq:row-hom}
	-|X|_{ij} \cdot |X^{i k}|_{sj}^{-1} = |X|_{ik} \cdot |X^{ij}|_{sk}^{-1}, \qquad s\neq i.
	\end{equation}
	\item Column homological relations:
	\begin{equation} \label{eq:chr}
	- |X^{kj}|_{is}^{-1} \cdot |X|_{ij}=  |X^{ij}|_{ks}^{-1} \cdot |X|_{kj} , \qquad s\neq j.
	\end{equation}
\end{itemize}

\subsubsection{Sylvester's identity}

Let $X_0 = (x_{ij})$, $i,j = 1,\dots ,k$, be a submatrix of $X$ that is invertible. For $p,q = k+1,\dots ,n$ set
\begin{equation*}
c_{pq} = \begin{vmatrix}
&&& x_{1q} \\
& X_0 & & \vdots\\
&&& x_{kq} \\
x_{p1} & \dots & x_{pk} & \boxed{x_{pq}} 
\end{vmatrix} \; ,
\end{equation*}
and consider the $(n-k) \times (n-k)$ matrix $C = (c_{pq})$, $p,q = k+1,\dots , n$. Then for $i,j = k+1,\dots , n$,
\begin{equation}
|X|_{ij} = |C|_{ij} \; .
\end{equation}
In applications  the Sylvester identity is usually used in conjunction with row/column permutations.

\section{Non-commutative Hermite--Pad\'{e} approximants} \label{sec:NCHP}
\subsection{Formulation of the problem}
Consider $m$ formal series $\left( f_1(x),\dots , f_m(x)\right) $ in variable $x$ with non-commuting coefficients  
\begin{equation}
	f_i(x) = \sum_{j=0}^\infty f^i_j x^j, 
\end{equation}
where the parameter $x$ commutes with all the coefficients.
Given $n=(n_1,\dots , n_m)$ element of $\ZZ_{\geq -1}^m$, we write also $|n|= n_1 + \dots + n_m$. A \emph{Hermite--Pad\'{e} form of degree $n$} is, by definition, every system of polynomials $(Y_1(x), \dots , Y_m(x))$, not all equal to zero, with corresponding degrees $\deg Y_i(x) \leq n_i$, $i=1,\dots , m$ (degree of the zero polynomial equals $-1$), and such that 
\begin{equation} \label{eq:HP-cond}
f_1(x) Y_1(x) + \dots +  f_m(x) Y_m(x) = x^{|n|+m-1}\Gamma(x)
\end{equation}
for a series $\Gamma(x) = \sum_{j=0}^\infty \Gamma^j x^j $.
\begin{Rem}
	In our paper we consider only the so called type I approximants. For closely reated type II and mixed type approximants in the commutative case see, for example \cite{Mahler-P}. 
\end{Rem}

\subsection{Solution of the problem}
The degree conditions and equation~\eqref{eq:HP-cond} lead to a system of $|n|+m-1$ linear equations with $|n|+m$ unknown coefficients of the polynomials.
Define matrix $\mathcal{M}(n)$ of $(|n|+m-1)$ rows and $(|n|+m)$ columns 
\begin{equation}
\mathcal{M}(n) = \left( \begin{array}{cccccccc}
f^1_0 &     & 0 & & & f^m_0 &   & 0 \\
f^1_1 & \ddots    &   & \cdots & \cdots & f^m_1 & \ddots  &   \\
\vdots &   &  f^1_0 & & & \vdots &    &  f^m_0 \\
\vdots &   &  \vdots & \cdots &\cdots & \vdots &  & \vdots\\
f^1_{|n|+m-2} & \cdots & f^1_{|n|+m - n_1 -2} & & & f^m_{|n|+m-2} & \cdots & f^m_{|n|+m - n_m -2}
\end{array} \right) ;
\end{equation}
its columns are divided into $m$ groups, the $i$th group is composed out of $n_i +1$ columns depending on $f_i(x)$ only. Supplement the matrix $\mathcal{M}(n)$ by the line
\begin{equation} \label{eq:bN}
\left( f^1_{|n|+m-1}, f^1_{|n|+m-2}, \dots , f^1_{|n|+m-n_1 -1}, \cdots , \cdots , f^m_{|n|+m-1},  \dots , f^m_{|n|+m-n_m-1} \right),
\end{equation} 
as the last row, and define $\rho^i(n)$ as the quasideterminant of such extended matrix, with respect to the element in the last row and the last column of the $i$th block.

\begin{Prop}
	By the row homological relations \eqref{eq:row-hom} the functions $\rho^j$, $j=1,\dots , m$, when exist, satisfy equations 
\begin{equation*} 
\label{eq:rho-hr}
\rho^i(n) [ \rho^i(n-e_j)]^{-1} = - \rho^j(n) [ \rho^j(n-e_i)]^{-1}, \qquad i\neq j. 
\end{equation*}
\end{Prop}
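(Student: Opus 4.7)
The plan is to apply the row homological relation~\eqref{eq:row-hom} to the extended $(|n|+m)\times(|n|+m)$ matrix from which the $\rho^i(n)$'s are built. Denote this matrix by $\mathcal{N}(n)$: it is obtained from $\mathcal{M}(n)$ by adjoining the row~\eqref{eq:bN} at the bottom. A key structural observation is that the entries of $\mathcal{N}(n)$ follow a single uniform pattern: in row $q\in\{0,\dots,|n|+m-1\}$, the entry lying in the $k$th block at local column $p\in\{0,\dots,n_k\}$ equals $f^k_{q-p}$ (with the convention $f^k_\ell=0$ for $\ell<0$). The supplementary row~\eqref{eq:bN} is not genuinely ``special'': it is simply the continuation of this rule at $q=|n|+m-1$. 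With $s_0=|n|+m$ denoting the last row index and $c_i$ the column index of the last column in the $i$th block, we then have by definition $\rho^i(n)=|\mathcal{N}(n)|_{s_0,c_i}$.

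Next I would identify the submatrix obtained from $\mathcal{N}(n)$ by deleting the row $s_0$ and the column $c_j$ (for some $j\neq i$) with $\mathcal{N}(n-e_j)$ itself. Removing the last row leaves rows indexed $q=0,\dots,|n|+m-2$, while removing the last column of the $j$th block shrinks that block from $n_j+1$ to $n_j$ columns; both counts and the block contents, governed by the block-local formula $f^k_{q-p}$, match precisely those of $\mathcal{N}(n-e_j)$. Inside this submatrix, the new last row is the original row $s_0-1$, and the last column of the $i$th block (untouched since $i\neq j$) is still indexed by the original $c_i$. Consequently
$$
|\mathcal{N}(n)^{s_0,c_j}|_{s_0-1,c_i}=\rho^i(n-e_j),
$$
and by symmetry the analogous identification holds with $i$ and $j$ swapped.

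Finally, I would invoke the row homological relation~\eqref{eq:row-hom} applied to $X=\mathcal{N}(n)$ with boxed row $s_0$, columns $c_i,c_j$ and auxiliary row $s=s_0-1$. Substituting the identifications from the previous two paragraphs turns the relation directly into
$$
-\rho^i(n)\,[\rho^i(n-e_j)]^{-1}=\rho^j(n)\,[\rho^j(n-e_i)]^{-1},
$$
which is the claim. The main obstacle I anticipate is purely combinatorial bookkeeping: one must track faithfully how block labels and column positions relabel under deletion of a column and under the shift $n\mapsto n-e_j$, and one must check that the block where the ``box'' lies and the block whose last column is struck are consistent with the entry rule $f^k_{q-p}$. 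The invertibility conditions implicit in~\eqref{eq:row-hom} are taken for granted here, precisely under the hypothesis, built into the proposition, that the $\rho^k$'s exist as quasideterminants.
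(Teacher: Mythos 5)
Your argument is correct and is precisely the one the paper intends: the Proposition's own phrase ``by the row homological relations'' is shorthand for applying \eqref{eq:row-hom} to the extended matrix with the last row boxed, the two columns being the last columns of the $i$th and $j$th blocks, and $s$ the penultimate row, after observing that deleting the last row and the last column of the $j$th block reproduces the defining matrix for $\rho^i(n-e_j)$. Your identification of the minors via the uniform entry rule $f^k_{q-p}$ supplies exactly the bookkeeping the paper leaves implicit, and the signs come out as claimed.
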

One can observe that the above system is identical with equations~\eqref{eq:rho-rho} of the theory of the non-Abelian Hirota--Miwa system; we keep therefore the same notation. Moreover we are going to show how other elements of the theory can be obtained from the data of the Hermite--Pad\'{e} approximants.

Supplement the matrix $\mathcal{M}(n)$ at the bottom by the line 
\begin{equation}
\label{eq:Xk}
X_k = (0,\dots, 0, \dots \; \dots , 1, x , \dots , x^{n_k}, \dots \; \dots , 0, \dots ,0),
\end{equation}
consisting of zeros except for the $k$th block of the form $1, x, \dots , x^{n_k}$. Define $Z_k^{(i)}(n,x)$ as the quasideterminant of such matrix with respect to the element in the last row and the last column of the $i$th block. 

\begin{Prop} \label{prop:Z_k^i}
	When it exists, $Z_k^{(i)}(n,x)$ is a polynomial in $x$ of degree not greater than $n_k$. In particular, $Z_k^{(k)}(n,x)$ is monic.
\end{Prop}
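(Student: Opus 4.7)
The plan is to apply the recursive quasideterminant expansion \eqref{eq:QD-exp} to the formula defining $Z_k^{(i)}(n,x)$ and track the $x$-dependence of each factor that is produced. Denote by $\widehat{\mathcal{M}}_k(n)$ the square matrix of size $|n|+m$ obtained by appending the row $X_k$ of \eqref{eq:Xk} to $\mathcal{M}(n)$; write $r_\ast=|n|+m$ for the index of the appended row and $c_i$ for the column index of the last column of the $i$-th block. The key structural observation is that $x$ appears in $\widehat{\mathcal{M}}_k(n)$ only in row $r_\ast$, and there only inside the $k$-th block, where the entries read $1,x,x^2,\dots,x^{n_k}$; every other entry is either a coefficient $f^j_\ell$ or zero. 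In particular, the boxed entry $(\widehat{\mathcal{M}}_k)_{r_\ast,c_i}$ equals $x^{n_k}$ when $i=k$ and equals $0$ when $i\neq k$.

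Expanding the quasideterminant at position $(r_\ast,c_i)$ via \eqref{eq:QD-exp} gives
\begin{equation*}
Z_k^{(i)}(n,x)=(\widehat{\mathcal{M}}_k)_{r_\ast,c_i}-\sum_{\substack{r\neq r_\ast\\ c\neq c_i}}(\widehat{\mathcal{M}}_k)_{r_\ast,c}\,\bigl|\widehat{\mathcal{M}}_k^{\,r_\ast c_i}\bigr|_{r,c}^{-1}\,(\widehat{\mathcal{M}}_k)_{r,c_i}.
\end{equation*}
The inner quasideterminants are taken on the minor $\widehat{\mathcal{M}}_k^{\,r_\ast c_i}$, which omits the only row carrying $x$, so each $\bigl|\widehat{\mathcal{M}}_k^{\,r_\ast c_i}\bigr|_{r,c}^{-1}$ is independent of $x$. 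The rightmost factor $(\widehat{\mathcal{M}}_k)_{r,c_i}$ lies in column $c_i$ and in a row $r<r_\ast$, so it is a coefficient $f^i_\ell$, again $x$-free. All the $x$-dependence therefore resides in the leftmost factor $(\widehat{\mathcal{M}}_k)_{r_\ast,c}=(X_k)_c$, which is non-zero only when $c$ belongs to the $k$-th block, in which case it equals $x^s$ for some $0\le s\le n_k$.

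Putting the pieces together, $Z_k^{(i)}(n,x)$ is a polynomial in $x$ whose exponents are controlled by the admissible values of $s$. For $i\neq k$ the column $c_k$ lies inside the summation range $c\neq c_i$, so $s$ can take any value in $\{0,1,\dots,n_k\}$ and $\deg Z_k^{(i)}(n,x)\le n_k$. For $i=k$ the column $c_i=c_k$ is excluded from the sum, so the sum can only produce exponents $s\in\{0,1,\dots,n_k-1\}$, while the boxed term contributes $x^{n_k}$ with coefficient $1$; hence $Z_k^{(k)}(n,x)$ is monic of degree exactly $n_k$. The argument is essentially combinatorial bookkeeping of block columns, so no real analytical difficulty arises; the only point demanding care is to check that all the quasideterminants of minors invoked by \eqref{eq:QD-exp} are well-defined, which is precisely the \emph{when it exists} clause of the statement.
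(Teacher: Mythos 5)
Your proof is correct and follows essentially the same route as the paper: a single application of the recurrence \eqref{eq:QD-exp} at the boxed position, noting that the appended row $X_k$ is the only source of $x$-dependence and that the minor quasideterminants are therefore $x$-free. Your version is merely more explicit than the paper's two-line sketch, and the bookkeeping of the excluded column $c_k$ in the case $i=k$ correctly yields monicity.
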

\begin{proof}
	By the elementary row operations and the recurrence~\eqref{eq:QD-exp} one can decompose $Z_k^{(i)}(n,x)$ into the sum of quasideterminants multiplied by subsequent powers of the parameter $x$. The highest order term of $Z_k^{(k)}(x,n)$ can be easily calculated using equation~\eqref{eq:QD-exp}.
\end{proof}
\begin{Prop} \label{prop:Z-psi}
The product 
\begin{equation} \label{eq:psi-Z}
\psi_k(n,x) = Z_k^{(i)}(n,x) [ \rho^i(n)]^{-1} , \qquad i,k = 1, \dots , m,
\end{equation}
is independent of the index $i$. This allows to find the leading term of the polynomials
\begin{align}
Z_k^{(i)}(n,x)  = &
[\rho^k(n)]^{-1} \rho^i(n) x^{n_k} + \text{lower order terms}
, \qquad i,k=1, \dots , m, \\
\psi_k(n,x)  = &
[\rho^k(n)]^{-1} x^{n_k} + \text{lower order terms}
, \qquad k=1, \dots , m.
\end{align}
\end{Prop}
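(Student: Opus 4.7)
The plan is to exploit the fact that the matrices defining $Z_k^{(i)}(n,x)$ and $\rho^i(n)$ share the same first $|n|+m-1$ rows (namely $\mathcal{M}(n)$); they differ only in the last row. Let $\widehat{M}_k$ denote $\mathcal{M}(n)$ with the row $X_k$ appended, and $\widehat{M}_*$ denote $\mathcal{M}(n)$ with the row~\eqref{eq:bN} appended. Writing $r=|n|+m$ for the index of the last row and $j_i$ for the index of the last column of the $i$th block, we have $Z_k^{(i)}(n,x)=|\widehat{M}_k|_{r,j_i}$ and $\rho^i(n)=|\widehat{M}_*|_{r,j_i}$.

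The crucial observation is that $\widehat{M}_k^{\,r,j}$ (delete the last row and the $j$th column) coincides with $\widehat{M}_*^{\,r,j}$, since both reduce to $\mathcal{M}(n)$ with the $j$th column removed. I would then apply the row homological relation~\eqref{eq:row-hom} to the matrix $\widehat{M}_k$ with boxed row $r$ and columns $j_i,j_l$, for any auxiliary row index $s\neq r$, obtaining
\begin{equation*}
-Z_k^{(i)}(n,x)\,\bigl|\widehat{M}_k^{\,r,j_l}\bigr|_{s,j_i}^{-1} = Z_k^{(l)}(n,x)\,\bigl|\widehat{M}_k^{\,r,j_i}\bigr|_{s,j_l}^{-1},
\end{equation*}
and the analogous identity for $\widehat{M}_*$
\begin{equation*}
-\rho^i(n)\,\bigl|\widehat{M}_*^{\,r,j_l}\bigr|_{s,j_i}^{-1} = \rho^l(n)\,\bigl|\widehat{M}_*^{\,r,j_i}\bigr|_{s,j_l}^{-1}.
\end{equation*}
By the observation above the two inverse quasideterminants on each side agree between the starred and hatted cases, so eliminating them between the pair of identities yields
\begin{equation*}
Z_k^{(i)}(n,x)\,[\rho^i(n)]^{-1} = Z_k^{(l)}(n,x)\,[\rho^l(n)]^{-1},
\end{equation*}
which is the claimed $i$-independence of $\psi_k(n,x)$.

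For the leading terms, I would take $i=k$ and invoke Proposition~\ref{prop:Z_k^i}: since $Z_k^{(k)}(n,x)$ is monic of degree $\le n_k$, writing $Z_k^{(k)}(n,x)=x^{n_k}+(\text{lower})$ and using that $x$ commutes with all entries of $\DD$ gives
\begin{equation*}
\psi_k(n,x) = [\rho^k(n)]^{-1} x^{n_k} + \text{lower order terms}.
\end{equation*}
Multiplying on the right by $\rho^i(n)$ and again pulling $x^{n_k}$ through the coefficient recovers the stated leading behaviour of $Z_k^{(i)}(n,x)$.

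The only delicate point is bookkeeping: one must verify that the submatrices $\widehat{M}_k^{\,r,j}$ really are identical with $\widehat{M}_*^{\,r,j}$ (which follows because the deleted row is exactly the distinguishing one) and that the row index $s$ can be chosen so that all the inverse quasideterminants appearing in~\eqref{eq:row-hom} are defined. Once the genericity conditions under which the objects $\rho^i(n)$ and $Z_k^{(i)}(n,x)$ exist are assumed, no further technical difficulty arises.
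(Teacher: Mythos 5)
Your argument is correct and is essentially the paper's own proof: the paper likewise obtains the identity by applying the row homological relations to the two bordered matrices, which share all their last-row-deleted minors (the paper just identifies the common inner quasideterminants explicitly as $\rho^i(n-e_j)$, corresponding to your choice $s=|n|+m-1$), and then eliminates them against the relations \eqref{eq:rho-hr}; the leading-term claims are deduced from Proposition~\ref{prop:Z_k^i} exactly as you do. No substantive difference.
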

\begin{proof}	
By combining the row homological relations 
\begin{equation} 
\label{eq:Z-hr}
Z_k^{(i)}(n,x) [ \rho^i(n-e_j)]^{-1} = - Z_k^{(j)}(n,x) [ \rho^j(n-e_i)]^{-1}, \qquad i\neq j,k=1, \dots , m,
\end{equation}
with analogous relations \eqref{eq:rho-hr} we obtain the first part of the statement. The second part follows from Proposition~\ref{prop:Z_k^i}.
\end{proof}

Let us present the role of the polynomials $\psi_k(n,x)$ within the non-commutative Hermite--Pad\'{e} approximation theory.
\begin{Prop}
	The polynomials $(\psi_{1}(n,x), \dots , \psi_m(n,x))$  provide solution of the non-commutative Hermite--Pad\'{e} problem with the following asymptotic
	\begin{equation} \label{eq:HP-psi}
	f_1(x) \psi_1(n,x) + \dots + f_m(x) \psi_m(n,x) = x^{|n| + m - 1} + \text{higher order terms}.
	\end{equation}
\end{Prop}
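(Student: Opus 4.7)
The degree bound $\deg \psi_k(n,x) \leq n_k$ follows at once from Proposition~\ref{prop:Z_k^i}, since the invertible right factor $[\rho^i(n)]^{-1}$ preserves degree; Proposition~\ref{prop:Z-psi} further identifies the leading coefficient of $\psi_k$ as $[\rho^k(n)]^{-1}$, so the tuple $(\psi_1,\dots,\psi_m)$ is nontrivial. It remains to check the asymptotic~\eqref{eq:HP-psi}.

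My plan is to realise $\sum_{k=1}^m f_k(x)\, Z_k^{(i)}(n,x)$ as a single quasideterminant. I would form the $(|n|+m)\times(|n|+m)$ square matrix
\[
M_{*}(n,x) \;=\; \begin{pmatrix} \mathcal{M}(n) \\ R(x) \end{pmatrix}, \qquad R(x) \;=\; \sum_{k=1}^m f_k(x)\,X_k,
\]
whose appended bottom row carries the entry $f_k(x)\, x^p$ in the column of offset $p$ in the $k$th block. The Schur-complement representation $|M|_{ij} = M_{ij} - \mathbf{r}_i\, A^{-1}\, \mathbf{c}_j$ shows that the quasideterminant is left-linear in the boxed row; expanding $R(x)$ as a left-$\DD$-linear combination of the $X_k$ therefore yields
\[
|M_{*}(n,x)|_{\mathrm{last},\, j_{(i)}} \;=\; \sum_{k=1}^m f_k(x)\, Z_k^{(i)}(n,x),
\]
with $j_{(i)}$ denoting the last column of the $i$th block.

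Next I would simplify the quasideterminant by row operations on its boxed bottom row. Writing $R(x) = \sum_{l\geq 0} x^l\, r_l$, the vectors $r_0,\dots,r_{|n|+m-2}$ are precisely the rows of $\mathcal{M}(n)$, while $r_{|n|+m-1}$ coincides with the row~\eqref{eq:bN} used in the definition of $\rho^i(n)$. Subtracting $\sum_{l=0}^{|n|+m-2} x^l\, r_l$ from the bottom row is a left-scalar row operation that does not affect the quasideterminant, because the boxed entry lies in the modified row. What survives in the bottom row is $x^{|n|+m-1}\bigl(r_{|n|+m-1} + x\, r_{|n|+m} + \cdots \bigr)$. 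Pulling the commuting scalar $x^{|n|+m-1}$ out to the left and isolating the $x^0$-contribution through the same left-linearity, I would obtain
\[
\sum_{k=1}^m f_k(x)\, Z_k^{(i)}(n,x) \;=\; x^{|n|+m-1}\bigl(\rho^i(n) + O(x)\bigr),
\]
after which right-multiplication by $[\rho^i(n)]^{-1}$ produces~\eqref{eq:HP-psi}.

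The point that requires most care is the bookkeeping of the non-commutative row manipulations: the left-linearity of the quasideterminant in its boxed row (a direct consequence of the Schur-complement formula) and the invariance under left-scalar row operations recalled in Section~\ref{sec:preliminaries} must be applied rigorously, and the factorisation of the scalar $x^{|n|+m-1}$ is legitimate only because $x$ commutes with every coefficient. With these ingredients in place the argument reduces to matching the coefficient definitions of $\rho^i(n)$ and $Z_k^{(i)}(n,x)$.
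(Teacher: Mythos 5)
Your proposal is correct and follows essentially the same route as the paper: you append the same bottom row $\bigl(f_1(x),\dots,x^{n_1}f_1(x),\dots,f_m(x),\dots,x^{n_m}f_m(x)\bigr)$, expand the resulting quasideterminant by left-linearity in the boxed row to obtain $\sum_k f_k(x)Z_k^{(i)}(n,x)$, and then use left-scalar row operations to strip the terms of order below $|n|+m-1$ and identify the leading coefficient as $\rho^i(n)$, exactly as in the paper's argument. Your version simply spells out the bookkeeping (the identification of the coefficient rows $r_l$ with the rows of $\mathcal{M}(n)$ and of $r_{|n|+m-1}$ with the row defining $\rho^i(n)$) that the paper leaves implicit.
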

\begin{proof}
	Supplement the matrix $\mathcal{M}(n)$ at the bottom by the line 
	\begin{equation*}
	\left( f_1(x), xf_1(x), \dots , x^{n_1} f_1(x), \cdots \: \cdots , f_m(x), x f_m(x), \dots , x^{n_m} f_m(x) \right),
	\end{equation*}
	and calculate the quasi-determinant $QD_i$ of the resulting square matrix	with respect to the element in the last row and in the last column of the $i$th block. The direct calculation, by decomposition of the last row into blocks in equation~\eqref{eq:QD-exp}, gives
	\begin{equation*}
	QD_i = f_1(t) Z_1^{(i)}(n,t) + \dots + f_m(t) Z_m^{(i)}(n,t).
	\end{equation*}
	From the other hand, the row operations allow to remove from the last row terms of the order lower than $|n| + m - 1$ in the parameter $x$, what implies 
	\begin{equation*}
	QD_i = x^{|n| + m - 1} \rho^i(n) + \text{higher order terms}.
	\end{equation*}	
	Then equation~\eqref{eq:HP-psi} follows from definition~\eqref{eq:psi-Z}.
\end{proof}
\begin{Cor}
	Equivalently, $(Z_{1}^{(i)}(n,x),\dots , Z_m^{(i)}(n,x))$ where $i=1,\dots,m$, are also solutions of the non-commutative Hermite--Pad\'{e} problem with asymptotics
	\begin{equation} \label{eq:HP-Zi}
	f_1(x) Z_1^{(i)}(n,x) + \dots + f_m(x) Z_m^{(i)}(n,x) = 
	x^{|n| + m - 1} \rho^i(n) + \text{higher order terms}.
	\end{equation}	
\end{Cor}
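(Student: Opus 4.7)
The plan is to derive the identity~\eqref{eq:HP-Zi} from~\eqref{eq:HP-psi} by a single right multiplication by $\rho^i(n)$.

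First, I would rewrite the defining relation~\eqref{eq:psi-Z} of Proposition~\ref{prop:Z-psi} in the equivalent form $Z_k^{(i)}(n,x) = \psi_k(n,x)\,\rho^i(n)$ for $k=1,\dots,m$, valid whenever the relevant quasideterminants exist so that $\rho^i(n)\in\DD$ is invertible. Since $\rho^i(n)$ is constant in $x$, and since $x$ commutes with every coefficient in every $f_k$, I would then multiply equation~\eqref{eq:HP-psi} on the right by $\rho^i(n)$ and obtain
\begin{equation*}
\sum_{k=1}^m f_k(x)\,Z_k^{(i)}(n,x) = \left(\sum_{k=1}^m f_k(x)\,\psi_k(n,x)\right)\rho^i(n) = x^{|n|+m-1}\rho^i(n) + \text{higher order terms},
\end{equation*}
which is precisely~\eqref{eq:HP-Zi}.

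Second, I would verify that $(Z_1^{(i)}(n,x),\dots,Z_m^{(i)}(n,x))$ really constitutes a Hermite--Pad\'e form in the sense of Section~\ref{sec:NCHP}: the degree bounds $\deg Z_k^{(i)}(n,x)\leq n_k$ follow directly from Proposition~\ref{prop:Z_k^i}, while the non-triviality of the system is guaranteed by the leading term computation in Proposition~\ref{prop:Z-psi}, since the leading coefficient $[\rho^k(n)]^{-1}\rho^i(n)$ of $Z_k^{(k)}(n,x)$ is invertible in $\DD$.

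There is essentially no obstacle here: the corollary is a direct renormalisation of the preceding proposition. The only point worth flagging is that the rearrangement
\begin{equation*}
\bigl(f_k(x)\,\psi_k(n,x)\bigr)\rho^i(n) = f_k(x)\,\bigl(\psi_k(n,x)\,\rho^i(n)\bigr)
\end{equation*}
is a purely associative manipulation, not a commutation statement, so the argument transfers verbatim to the non-commutative setting without requiring any reordering of the factors $f_k(x)$, $\psi_k(n,x)$, $\rho^i(n)$.
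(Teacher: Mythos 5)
Your proposal is correct and matches the intended argument: the paper offers no separate proof of the Corollary because, as the word ``Equivalently'' signals, \eqref{eq:HP-Zi} and \eqref{eq:HP-psi} differ only by right multiplication by $\rho^i(n)$ via $Z_k^{(i)}(n,x)=\psi_k(n,x)\,\rho^i(n)$ --- indeed the paper's proof of the preceding Proposition actually establishes \eqref{eq:HP-Zi} first (as the computation of $QD_i$) and then divides by $\rho^i(n)$, so your derivation just runs that step in reverse. The only nit is a typo: the leading coefficient $[\rho^k(n)]^{-1}\rho^i(n)$ you invoke for non-triviality belongs to $Z_k^{(i)}(n,x)$, not $Z_k^{(k)}(n,x)$ (which is monic).
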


\subsection{The non-commutative Hirota system}
Finally, we show the relation of the above solution of the non-commutative Hermite--Pad\'{e} problem with the non-commutative Hirota system.
\begin{Prop} The polynomials $\psi_k(n,x)$, $k=1,\dots , m$, satisfy the linear problem of the non-Abelian Hirota--Miwa system
	\begin{equation} \label{eq:psi-rho-HP}
	\psi_k(n-e_i,x) - 	\psi_k(n-e_j,x) = 	\psi_k(n,x) \rho^j(n) [ \rho^j(n-e_i)]^{-1}, \qquad i\neq j.
	\end{equation}
\end{Prop}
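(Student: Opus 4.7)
The plan is to recognise both sides of \eqref{eq:psi-rho-HP} as Hermite--Pad\'{e} forms of degree $n$ and then invoke the (generic) uniqueness of such forms up to right multiplication by an element of $\DD$. Set $\Phi_k(n,x) := \psi_k(n - e_i, x) - \psi_k(n - e_j, x)$ for $k = 1, \dots, m$. By Proposition~\ref{prop:Z-psi}, each $\psi_k(n - e_l, x)$ has degree at most $n_k - \delta_{kl}$ in $x$, so every $\Phi_k(n, x)$ is a polynomial of degree at most $n_k$. Substituting the asymptotic \eqref{eq:HP-psi} at the shifted points $n - e_i$ and $n - e_j$ and subtracting yields
\begin{equation*}
\sum_{k=1}^m f_k(x)\, \Phi_k(n, x) = \bigl[ x^{|n| + m - 2} + \text{h.o.t.} \bigr] - \bigl[ x^{|n| + m - 2} + \text{h.o.t.} \bigr] = O\bigl(x^{|n| + m - 1}\bigr),
\end{equation*}
since the two leading monomials cancel. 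Hence $(\Phi_k(n,x))_{k=1}^m$ is itself a Hermite--Pad\'{e} form of degree $n$.

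Under the generic non-degeneracy implicit in the construction of $\rho^i(n)$ and $\psi_k(n,x)$---namely that the extended matrix $\mathcal{M}(n)$ has full row rank $|n| + m - 1$, so the HP solution space at level $n$ is exactly one-dimensional as a right $\DD$-module---there exists $c \in \DD$, independent of $k$ and $x$, such that $\Phi_k(n, x) = \psi_k(n, x)\, c$ for every $k$. To pin down $c$, I would specialise to $k = i$: since $\psi_i(n - e_i, x)$ has degree at most $n_i - 1$, the coefficient of $x^{n_i}$ in $\Phi_i(n, x)$ is produced solely by $-\psi_i(n - e_j, x)$ and equals $-[\rho^i(n - e_j)]^{-1}$ by Proposition~\ref{prop:Z-psi}. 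Matching this against the coefficient $[\rho^i(n)]^{-1} c$ of $x^{n_i}$ in $\psi_i(n, x)\, c$ gives $c = -\rho^i(n)[\rho^i(n - e_j)]^{-1}$, and a single application of the homological relation \eqref{eq:rho-hr} rewrites this as $c = \rho^j(n)[\rho^j(n - e_i)]^{-1}$, exactly the factor appearing in \eqref{eq:psi-rho-HP}.

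The main delicate point is the uniqueness step: one needs the HP solution space at degree $n$ to be exactly one-dimensional, not merely at least one-dimensional. This is a standard genericity hypothesis on the coefficients $f^i_j$ that is already built into the very existence of the quasideterminants $\rho^i(n)$ and $\psi_k(n,x)$. An alternative route avoiding uniqueness altogether would be to deduce \eqref{eq:psi-rho-HP} directly from the quasideterminant definitions of $Z_k^{(i)}$ and $\rho^i$ at the three points $n$, $n - e_i$, $n - e_j$ by applying Sylvester's identity together with the column homological relations \eqref{eq:chr}, but this is technically heavier and less transparent than the asymptotic argument sketched above.
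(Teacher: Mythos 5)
Your argument is correct, but it is not the proof the paper actually gives for this proposition: it is essentially the alternative derivation that the author sketches in the Remark immediately following it (``there is an alternative way to derive the linear problem\dots''). The paper's own proof is the direct quasideterminantal computation you mention only in passing at the end: one borders $\mathcal{M}(n)$ by the row $X_k$ and applies Sylvester's identity with respect to the last two rows and the last columns of the $i$th and $j$th blocks, obtaining
\begin{equation*}
Z_k^{(j)}(n,x) = Z_k^{(j)}(n-e_i,x) - Z_k^{(i)}(n-e_j,x)\,[\rho^i(n-e_j)]^{-1}\rho^j(n-e_i),
\end{equation*}
and then right-multiplies by $[\rho^j(n-e_i)]^{-1}$ and uses the homological relations \eqref{eq:rho-hr} to pass to $\psi_k$ and to the case $j<i$. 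What the Sylvester route buys is exactly independence from the step you correctly single out as delicate: it needs only the existence of the quasideterminants involved, not that the solution space of the Hermite--Pad\'{e} problem at level $n$ is \emph{exactly} one-dimensional as a right $\DD$-module. Your route is shorter and more conceptual but genuinely requires that perfectness/non-degeneracy hypothesis, which the paper imposes only in the Remark, not in the Proposition itself. Two smaller points on your execution: determining $c$ from the $k=j$ component is slightly cleaner --- there the leading coefficient of $\Phi_j(n,x)$ comes from $\psi_j(n-e_i,x)$ alone and equals $[\rho^j(n-e_i)]^{-1}$, giving $c=\rho^j(n)[\rho^j(n-e_i)]^{-1}$ directly without invoking \eqref{eq:rho-hr}; and you should state explicitly that right multiplication by $c\in\DD$ preserves the space of Hermite--Pad\'{e} forms precisely because $x$ commutes with all coefficients, which is what makes the one-dimensionality statement usable here.
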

\begin{proof}
	Let us apply the Sylvester identity for matrix $\mathcal{M}(n)$ supplemented by $X_k$, with respect to last two rows and the last columns of the $i$th and $j$th blocks, where $i<j$. The corresponding $2\times 2$ matrix reads
	\begin{equation*}
	\left( \begin{matrix}
	\rho^i(n-e_j) & \rho^j(n-e_i) \\
	Z_{k}^{(i)}(n-e_j) & Z_{k}^{(j)}(n-e_i)
	\end{matrix} \right),
	\end{equation*} 
and the quasideterminant of the big matrix with respect to the element in the last row and the last column of the $j$th block equals then
	\begin{equation}
	Z_k^{(j)}(n,x) = Z_k^{(j)}(n-e_i,x) - 
	Z_k^{(i)}(n-e_j,x) [ \rho^i(n-e_j)]^{-1}  \rho^j(n-e_i). 
	\end{equation}
Multiplication of the above equation by $[\rho^j(n-e_i)^{-1}]$
from the right gives the statement for such an ordering of indices. The homological relations \eqref{eq:rho-hr} prove the statement for $j<i$.
\end{proof}

\begin{Rem}
	To avoid degenerations we usually assume that for all $n\in\ZZ^m_{\geq 0}$ the potentials $\rho^i(n)$ do not vanish. Such a system of series $(f_1, \dots , f_m)$ is called \emph{perfect}, in analogy to the commutative case \cite{Mahler-P}. Then, by properties of the quasideterminants~\cite{Quasideterminants} for each $n$ the Hermite--Pad\'{e} problem has a solution defined uniquely up to a constant factor with maximal degrees of the polynomials. In such a case there is an alternative way to derive the linear problem~\eqref{eq:psi-rho-HP}.
	The system of polynomials defined on the left hand side of \eqref{eq:psi-rho-HP} provides the solution of the Hermite--Pad'{e} problem for $n$, thus by the non-degeneracy condition must be proportional to the system $\psi_k(n,x)$. The coefficient is fixed by examining the highest order term of $k=j$th polynomial on both sides and Proposition~\ref{prop:Z-psi}. 
\end{Rem}
\begin{Cor}
	By examining the highest order term of $k$th polynomial, $k\neq i,j$, on both sides of the linear problem~\eqref{eq:psi-rho-HP} and using the homological relations~\eqref{eq:rho-hr} we obtain that the potentials $\rho^i$ satisfy the second part~\eqref{eq:rho-rho-rho} of the non-commutative Hirota equations.
\end{Cor}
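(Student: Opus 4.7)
The plan is to select three pairwise distinct indices $i,j,k$ and to read off the coefficient of $x^{n_k}$ on both sides of the linear problem~\eqref{eq:psi-rho-HP}. By Proposition~\ref{prop:Z-psi}, the polynomial $\psi_k(n',x)$ has leading term $[\rho^k(n')]^{-1} x^{n'_k}$, and the highest power in $x$ is $x^{n_k}$ simultaneously for $n'=n$, $n-e_i$, and $n-e_j$, precisely because $k\neq i,j$ so that the $k$th component is not shifted.

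Matching the $x^{n_k}$ coefficient in \eqref{eq:psi-rho-HP} would then yield
\begin{equation*}
[\rho^k(n-e_i)]^{-1} - [\rho^k(n-e_j)]^{-1} = [\rho^k(n)]^{-1}\,\rho^j(n)\,[\rho^j(n-e_i)]^{-1}.
\end{equation*}
After left-multiplying by $\rho^k(n)$ this becomes
\begin{equation*}
\rho^k(n)[\rho^k(n-e_i)]^{-1} - \rho^k(n)[\rho^k(n-e_j)]^{-1} = \rho^j(n)[\rho^j(n-e_i)]^{-1}.
\end{equation*}

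I would then invoke the homological relation~\eqref{eq:rho-hr} twice: once in the form $\rho^j(n)[\rho^j(n-e_i)]^{-1} = -\rho^i(n)[\rho^i(n-e_j)]^{-1}$ to rewrite the right-hand side, and once in the form $\rho^k(n)[\rho^k(n-e_j)]^{-1} = -\rho^j(n)[\rho^j(n-e_k)]^{-1}$ to rewrite the middle term. Substituting both identities and moving everything to the same side assembles the expression directly into \eqref{eq:rho-rho-rho}.

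I do not anticipate a genuine obstacle: this is essentially a one-line verification once Proposition~\ref{prop:Z-psi} is in hand. The only point to watch is the non-commutative bookkeeping, namely that $\rho^k(n)$ must be brought in from the left and that each invocation of \eqref{eq:rho-hr} must be chosen so as to preserve the correct outer factors, so that the three terms end up in the cyclic arrangement $\rho^i(n)[\rho^i(n-e_j)]^{-1} + \rho^j(n)[\rho^j(n-e_k)]^{-1} + \rho^k(n)[\rho^k(n-e_i)]^{-1}$ demanded by the second family of Hirota compatibility conditions.
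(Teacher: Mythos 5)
Your proposal is correct and follows exactly the route the paper intends: the corollary's statement itself is the proof sketch (compare leading coefficients of the $k$th polynomial, $k\neq i,j$, in~\eqref{eq:psi-rho-HP} via Proposition~\ref{prop:Z-psi}, then apply~\eqref{eq:rho-hr} twice), and your computation fills in precisely those steps with the non-commutative factors in the right order, arriving at~\eqref{eq:rho-rho-rho}.
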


\subsection{Comparison with the commutative case}
To close this Section let us show the link with the  commutative case~\cite{Doliwa-Siemaszko-HP}. By equation~\eqref{eq:qdet-det} we have
	\begin{equation}
	\rho^{i}(n) \stackrel{c}{=} (-1)^{n_{i+1} + \dots + n_m + m-i} \frac{\tau(n)}{\tau(n-e_i)},
	\end{equation}
	where $\tau(n)$ is the determinant of the square matrix used to define $\rho^i(n)$
\begin{equation*}
\tau(n) \stackrel{c}{=} \left| \begin{array}{cccccccc}
f^1_0 &     & 0 & & & f^m_0 &   & 0 \\
f^1_1 & \ddots    &   & \cdots & \cdots & f^m_1 & \ddots  &   \\
\vdots &   &  f^1_0 & & & \vdots &    &  f^m_0 \\
\vdots &   &  \vdots & \cdots &\cdots & \vdots &  & \vdots\\
f^1_{|n|+m-2} & \cdots &   & & &  & \cdots & f^m_{|n|+m - n_m -2} \\
f^1_{|n|+m-1} & \cdots & f^1_{|n|+m - n_1 -1} & & & f^m_{|n|+m-1} & \cdots & f^m_{|n|+m - n_m -1}
\end{array} \right| .
\end{equation*}	
Similarly
\begin{equation}
Z_k^{(i)}(n,x) \stackrel{c}{=} (-1)^{n_{i+1} + \dots + n_m + m-i} \frac{Z_k(n,x)}{\tau(n-e_i)},
\end{equation}
where $Z_k(n,x)$ is the determinant of the square matrix used to define $Z_k^{(i)}(n,x)$	
\begin{equation*} \label{eq:Z-det}
Z_k(n,x)  \stackrel{c}{=} \left|
\begin{smallmatrix}
f^1_0 &     & 0 & & & f^k_0 & & 0 && & f^m_0 &   & 0 \\
f^1_1 & \ddots    &   & \cdots & \cdots & f^k_1  &\ddots &  &\cdots &\cdots & f^m_1 & \ddots  &   \\
\vdots &   &  f^1_0 & & & \vdots && f^k_0 & && \vdots &     &  f^m_0 \\
\vdots &   &  \vdots & \cdots &\cdots &&&&\cdots &\cdots & \vdots &  & \vdots\\
f^1_{|n|+m-2} & \cdots & f^1_{|n|+m - n_1 -2} & & &f^k_{|n|+m-2}& \cdots & f^k_{|n|+m-n_k - 2}& & &  f^m_{|n|+m-2} & \cdots & f^m_{|n|+m - n_m -2} \\
0 & \cdots & 0 & \cdots & \cdots &1 & \cdots & x^{n_k} & \cdots &\cdots & 0 & \cdots & 0
\end{smallmatrix} \right|.
\end{equation*}
Such polynomials gives the so called canonical solution of the Hermite--Pad\'{e} problem
\begin{equation}
f_1(t) Z_1(n,t) + \dots + f_m(t) Z_m(n,t) = 
x^{|n| + m - 1} \tau(n) + \text{higher order terms}.
\end{equation}
Moreover, then 
\begin{equation}
\psi_k(n,x) \stackrel{c}{=} \frac{Z_k(n,x)}{\tau(n)}
\end{equation}
satisfies the standard adjoint linear problem 
of the Hirota discrete KP system~\eqref{eq:H-M}. Correspondingly, the polynomials $Z_k(n,x)$ satisfy the so called bilinear form of the adjoint linear problem
\begin{equation}
\label{eq:lin-Z}
Z_{k}(n,x) \tau(n-e_i-e_j) = Z_k(n-e_i,x) \tau(n-e_j)
	 - Z_k(n-e_j,x) \tau(n-e_i), \qquad 1\leq i<j \leq m.
\end{equation}

\section{The "multidimensional" non-commutative discrete-time Toda equation} \label{sec:NCDT}
\subsection{The non-commutative version of the Paszkowski constraint}
In the commutative case the polynomials $Z_k(n,x)$ in addition to equations~\eqref{eq:lin-Z} satisfy the constraint~\cite{Paszkowski}
\begin{equation}
\label{eq:P-constr}
xZ_k(n,x) \tau(n) = Z_{k}(n+e_1,x)\tau(n-e_1) + \cdots + Z_{k}(n+e_m,x)\tau(n-e_m), 
\end{equation}
which supplements Hirota's discrete KP system with the additional equation
\begin{equation}
\label{eq:P-eq}
[\tau(n)]^2 = \tau(n+e_1)\tau(n-e_1) + \cdots + \tau(n+e_m)\tau(n-e_m). 
\end{equation}
Our goal in this Section will be to give the non-commutative version of the above.

\begin{Th}
	Under the non-degeneracy assumption the polynomials $\bpsi(n,x) = (\psi_1(n,x),\dots , \psi_m(n,x))$, which provide solution of the non-commutative Hermite--Pad\'{e} problem, satisfy also the constraint
\begin{equation}
\label{eq:P-constr-nc}
x \bpsi(n,x) = \bpsi(n+e_1,x) \rho^1(n+e_1) [\rho^1(n)]^{-1} + \dots + \bpsi(n+e_m,x) \rho^m(n+e_m) [\rho^m(n)]^{-1} ,
\end{equation}
and the quasideterminants $\rho^j(n)$ satisfy the following equation
\begin{equation}
\label{eq:P-eq-nc-1}
1= \rho^1(n+e_1) [\rho^1(n)]^{-1} + \dots + \rho^m(n+e_m) [\rho^m(n)]^{-1} .
\end{equation}	
\end{Th}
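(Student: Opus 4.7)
The plan is to introduce the defect
\[
\bphi(n,x) := x\bpsi(n,x) - \sum_{j=1}^m \bpsi(n+e_j,x)\, \rho^j(n+e_j)[\rho^j(n)]^{-1}
\]
and to prove $\bphi(n,x)\equiv 0$ by a Hermite--Pad\'{e} uniqueness argument at level $n$; identity \eqref{eq:P-eq-nc-1} will then drop out as a leading-coefficient comparison.

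First I would check the degree bound $\deg \phi_k(n,x) \leq n_k$, rather than the naive $n_k+1$. The only summands that could contribute an $x^{n_k+1}$ term to $\phi_k(n,x)$ are $x\psi_k(n,x)$ and the $j=k$ piece of the sum. By the explicit leading coefficient formula of Proposition~\ref{prop:Z-psi} and the centrality of $x$ (so that $\rho^k(n+e_k)[\rho^k(n)]^{-1}$ slides past $x^{n_k+1}$), both contributions equal $[\rho^k(n)]^{-1}x^{n_k+1}$ and cancel.

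Next I would compute $\sum_{k=1}^m f_k(x)\phi_k(n,x)$ using \eqref{eq:HP-psi} at the base level $n$ and at each shifted level $n+e_j$. Since $|n+e_j|+m-1=|n|+m$, every contribution starts at order $x^{|n|+m}$ instead of the expected $x^{|n|+m-1}$, and the $x^{|n|+m}$ coefficient equals $1-\sum_{j=1}^m \rho^j(n+e_j)[\rho^j(n)]^{-1}$. Together with the previous paragraph, this exhibits $\bphi(n,x)$ as a Hermite--Pad\'{e} form at level $n$ whose series $\Gamma(x)$ in \eqref{eq:HP-cond} has vanishing constant term.

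By the perfectness hypothesis and the uniqueness-up-to-right-scalar of HP forms at level $n$ recalled in the Remark following \eqref{eq:HP-Zi}, there exists $c\in \DD$ with $\bphi(n,x) = \bpsi(n,x)\cdot c$. Applying $\sum_k f_k(x)\,\cdot$ and comparing the $x^{|n|+m-1}$ coefficient (which equals $c$ by \eqref{eq:HP-psi} on one side and $0$ by the previous paragraph on the other) forces $c=0$, establishing \eqref{eq:P-constr-nc}. Identity \eqref{eq:P-eq-nc-1} is then read off as the $x^{|n|+m}$ coefficient of $\sum_k f_k(x)\phi_k(n,x)=0$. The main obstacle is the leading-term cancellation in the first step, which requires careful tracking of the non-commutative coefficients; once that is in place, the rest of the argument is a structural uniqueness argument entirely parallel to the alternative derivation of \eqref{eq:psi-rho-HP} sketched in the earlier Remark.
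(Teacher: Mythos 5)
Your argument is correct, and it rests on the same two pillars as the paper's own proof: the uniqueness, up to a right constant factor, of the Hermite--Pad\'{e} form at a given multidegree (the perfectness assumption), and a leading-coefficient comparison via Proposition~\ref{prop:Z-psi}. The difference is in the decomposition. The paper omits the $j=m$ term from your defect $\bphi(n,x)$ and recognizes the remaining combination $x\bpsi(n,x)-\sum_{j=1}^{m-1}\bpsi(n+e_j,x)\rho^j(n+e_j)[\rho^j(n)]^{-1}$ as a Hermite--Pad\'{e} form of degree $n+e_m$, hence a right multiple of $\bpsi(n+e_m,x)$, and then computes the proportionality factor $\rho^m(n+e_m)[\rho^m(n)]^{-1}$ directly from the top coefficient of the $m$th component; equation \eqref{eq:P-eq-nc-1} is afterwards read off from \eqref{eq:HP-psi}. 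You instead keep the full defect, show it is a Hermite--Pad\'{e} form of degree $n$ whose remainder vanishes to one order higher than required, and conclude that its proportionality factor to $\bpsi(n,x)$ must be zero. Both routes require exactly the same cancellation of the $x^{n_k+1}$ terms (yours for every $k$, the paper's for $k\neq m$, plus the top-term computation at $k=m$), and both lean equally on non-degeneracy. What your version buys is symmetry in the indices and the fact that \eqref{eq:P-constr-nc} and \eqref{eq:P-eq-nc-1} drop out of the single identity $\sum_k f_k\phi_k=0$ at consecutive orders; the paper's version is marginally shorter because the constant is computed rather than shown to vanish. One cosmetic point: the uniqueness statement you invoke is the one in the Remark at the end of the subsection on the non-commutative Hirota system (following the derivation of \eqref{eq:psi-rho-HP}), not the Corollary containing \eqref{eq:HP-Zi}.
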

\begin{proof}
The system of $m$ polynomials (the components of the vector)
\begin{equation*}
x \bpsi(n,x) - \bpsi(n+e_1,x) \rho^1(n+e_1) [\rho^1(n)]^{-1} - \dots + \bpsi(n+e_{m-1},x) \rho^{m-1}(n+e_{m-1}) [\rho^{m-1}(n)]^{-1} 
\end{equation*}	
forms a solution of the Hermite--Pad\'{e} problem with the same degrees as the system $\bpsi(n+e_m,x)$, thus both systems must be proportional. The coefficient can be found by examining the highest order term of the $m$th polynomial.
Equation \eqref{eq:P-eq-nc-1} follows from comparison of the leading order terms of both sides of~\eqref{eq:P-constr-nc} in the equation~\eqref{eq:HP-psi}.
\end{proof}
\begin{Rem}
	In the simplest case $m=2$ we are left with the system
	\begin{gather}
1= \rho^1(n+e_1) [\rho^1(n)]^{-1} + \rho^2(n+e_2) [\rho^2(n)]^{-1} ,\\
\rho^1(n) [ \rho^i(n-e_2)]^{-1} = - \rho^2(n) [ \rho^2(n-e_2)]^{-1}.
	\end{gather}
Elimination of $\rho^1(n)$ gives the non-commutative discrete Toda chain equation of the non-commutative Pad\'{e} theory~\cite{Doliwa-Siemaszko-W}
\begin{equation}
\rho^2(n+e_1) \left( [\rho^2(n-e_2)]^{-1} -  [\rho^2(n)]^{-1} \right) \rho^2(n-e_1) = \rho^2(n+e_2) - \rho^2(n).
\end{equation}	
A similar equation, connected to the above by the homological relations, was obtained recently within the theory of matrix orthogonal polynomials~\cite{Shi-HaoLi}. 
\end{Rem}
\begin{Rem}
In the commutative case the Hirota system~\eqref{eq:H-M} supplemented by the constraint~\eqref{eq:P-constr} coincides, up to simple change of the independent discrete variables, with the "multidimensional discrete-time Toda lattice" obtained in~\cite{AptekarevDerevyaginMikiVanAssche} within the theory of multiple orthogonal polynomials. Although the system of equations involves arbitrary number (but not less then two) of discrete variables, its initial-boundary data depend on functions of single variables~\cite{Doliwa-Siemaszko-HP}; for the Hermite--Pad\'{e} problem they are the $m$ sequences of the series coefficients. This shows that effectively the reduced system is two-dimensional. 
\end{Rem}
\begin{Rem}
The identity~\eqref{eq:P-constr-nc} satisfied by the solutions of the Hermite--Pad\'{e} problem was obtained under the non-degeneracy assumption. In the commutative case it was possible to derive~\cite{Doliwa-Siemaszko-HP} the constraint~\eqref{eq:P-constr} using standard properties of  determinants; see also~\cite{BakerGraves-Morris} for analogous direct proof of the identity~\eqref{eq:P-eq}. It seems possible that also in the non-commutative case equation~\eqref{eq:P-constr-nc} can be shown using more advanced quasideterminantal identities~\cite{KrobLeclerc} without that additional assumption. 
\end{Rem}
\subsection{Integrability of the constraint}
Let us leave the context of the Hermite--Pad\'{e} approximants, which give only special solutions to the non-Abelian Hirota--Miwa system. Their speciality is not only because of the constraint \eqref{eq:P-eq-nc-1}, but also because the solutions are defined for $n_k\geq -1$, $k=1,\dots , m$. We will study what impact on the solutions of the system has the additional constraint \eqref{eq:P-constr-nc} on the level of the linear problem~\eqref{eq:lin-d-adj-KP}. We will not be bounded to any special region of $\ZZ^m$.
\begin{Th}
	The compatibility condition of the linear system \eqref{eq:lin-d-adj-KP} with the additional equation \eqref{eq:P-constr-nc} is the non-Abelian Hirota--Miwa system~\eqref{eq:rho-rho}-\eqref{eq:rho-rho-rho} supplemented by the constraint
\begin{equation}
\label{eq:P-eq-nc}
\rho^1(n+e_1,x) [\rho^1(n)]^{-1} + \dots + \rho^m(n+e_m,x) [\rho^m(n)]^{-1} = 1 - F(|n|),
\end{equation}	
where $F\colon \ZZ \to \DD$ is a function of single variable $|n| = n_1 + \dots + n_m$.	
\end{Th}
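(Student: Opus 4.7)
My plan is to apply the difference operation of the linear system~\eqref{eq:lin-d-adj-KP} to the additional constraint~\eqref{eq:P-constr-nc}, and then use the constraint itself to eliminate every occurrence of the spectral parameter $x$, so that the compatibility condition reduces to a pure identity in the shifted wave functions. Since~\eqref{eq:rho-rho}-\eqref{eq:rho-rho-rho} is already known to be the compatibility of the linear system alone, the only genuinely new condition will come from the interaction of~\eqref{eq:P-constr-nc} with~\eqref{eq:lin-d-adj-KP}.

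First I would difference~\eqref{eq:P-constr-nc} taken at $n-e_i$ and at $n-e_j$, which yields
\begin{equation*}
x\bigl[\bpsi(n-e_i)-\bpsi(n-e_j)\bigr] = \sum_k \bigl[\bpsi(n+e_k-e_i) V^k(n-e_i) - \bpsi(n+e_k-e_j) V^k(n-e_j)\bigr],
\end{equation*}
with the shorthand $V^k(n)=\rho^k(n+e_k)[\rho^k(n)]^{-1}$. Applying~\eqref{eq:lin-d-adj-KP} on the left and then~\eqref{eq:P-constr-nc} at $n$ rewrites the left-hand side as $\sum_l \bpsi(n+e_l)\,V^l(n)\,U_{ij}(n)$. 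On the right I would re-expand every $\bpsi(n+e_k-e_l)$, $l\in\{i,j\}$, through~\eqref{eq:lin-d-adj-KP} taken at the lattice point $n+e_k$: for $k\notin\{i,j\}$ using the pair $(l,k)$, giving $\bpsi(n+e_k-e_l) = \bpsi(n) + \bpsi(n+e_k) U_{lk}(n+e_k)$; for $k\in\{i,j\}$ using the pair $(i,j)$, so that e.g.\ $\bpsi(n+e_i-e_j) = \bpsi(n) - \bpsi(n+e_i) U_{ij}(n+e_i)$.

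After collection, the whole identity takes the form
\begin{equation*}
\bpsi(n)\bigl[S(n-e_i)-S(n-e_j)\bigr] + \sum_l \bpsi(n+e_l)\,C^l_{ij}(n) = 0, \qquad S(n) = \sum_k V^k(n).
\end{equation*}
Treating $\bpsi(n)$ and the $\bpsi(n+e_l)$'s as independent (legitimate, since~\eqref{eq:lin-d-adj-KP} admits generic Cauchy data in $\VV(\DD)$), each coefficient must vanish. The coefficient of $\bpsi(n)$ gives $S(n-e_i)=S(n-e_j)$ for every $i\neq j$, and since the moves $n\mapsto n+e_i-e_j$ generate the full level set $|n|=\mathrm{const}$, this is equivalent to $S(n)=1-F(|n|)$ for some $F\colon\ZZ\to\DD$, i.e.\ precisely~\eqref{eq:P-eq-nc}. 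For $l\in\{i,j\}$ the $C^l_{ij}$ collapses by direct substitution of $U_{ij}(n)=-\rho^i(n)[\rho^i(n-e_j)]^{-1}$ (two telescoping cancellations), while for $l=k\notin\{i,j\}$ the vanishing of $C^k_{ij}$ amounts to the identity
\begin{equation*}
U_{ik}(n+e_k) V^k(n-e_i) - U_{jk}(n+e_k) V^k(n-e_j) = V^k(n) U_{ij}(n),
\end{equation*}
which, after the same $\rho$-substitution, is nothing but (a shift of) the third equation of~\eqref{eq:alg-comp-U}, equivalently~\eqref{eq:rho-rho-rho}; this is where the Hirota system enters in a non-trivial way.

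The main obstacle I expect is the clean bookkeeping of the case distinction $k\in\{i,j\}$ versus $k\notin\{i,j\}$, since the reductions of $\bpsi(n+e_k-e_l)$ use different index pairs and one must check that no coefficients are lost in the rearrangement. Once that is handled, the converse direction---that~\eqref{eq:alg-comp-U} together with~\eqref{eq:P-eq-nc} guarantees consistent propagation of~\eqref{eq:P-constr-nc} along the linear flow---follows by rerunning the same computation in reverse.
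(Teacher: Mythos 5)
Your proposal is correct and follows essentially the same route as the paper: difference the constraint~\eqref{eq:P-constr-nc} in the $e_i$ and $e_j$ directions, use the linear problem~\eqref{eq:lin-d-adj-KP} to eliminate $x$ and re-express everything through $\bpsi(n)$ and the $\bpsi(n+e_k)$, and conclude $S(n-e_i)=S(n-e_j)$ after the telescoping cancellations. The only cosmetic difference is that you read off the Hirota--Miwa equations from the vanishing of the coefficients of $\bpsi(n+e_k)$, $k\notin\{i,j\}$, whereas the paper assumes them (as the already-known compatibility of the linear system alone) to collapse those terms before cancelling $\bpsi(n)$ on the left --- a genericity step your ``independent Cauchy data'' remark plays the same role as.
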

\begin{proof}
We need to prove only the last part of the statement. Define 
\begin{equation*}
C(n) = 1 - \rho^1(n+e_1,x) [\rho^1(n)]^{-1} - \dots  - \rho^m(n+e_m,x) [\rho^m(n)]^{-1},
\end{equation*}	
and use directly the linear equations \eqref{eq:lin-d-adj-KP} and \eqref{eq:P-constr-nc} to obtain
\begin{equation*} 
\begin{split}
\bpsi(n+e_1,x) C(n) = \bpsi(n+e_1,x) - x \bpsi(n,x) + \bpsi(n+e_1 + e_2,x) \rho^2(n+e_1 + e_2)[\rho^2(n)]^{-1} + \\
 \bpsi(n+e_1 + e_3,x) \rho^3(n+e_1 + e_3)[\rho^3(n)]^{-1} + \dots  + \bpsi(n+e_1 + e_m,x) \rho^m(n+e_1 + e_m)[\rho^m(n)]^{-1}.
\end{split} 
\end{equation*}
Shift the above equation back in the $e_1$ direction and subtract the analogous equation with distinguished second variable
\begin{equation*} 
\begin{split}
\bpsi(n+e_2,x) C(n) = \bpsi(n+e_2,x) - x \bpsi(n,x) + \bpsi(n+e_1 + e_2,x) \rho^1(n+e_1 + e_2)[\rho^1(n)]^{-1} + \\
 \bpsi(n+e_2 + e_3,x) \rho^3(n+e_2 + e_3)[\rho^3(n)]^{-1} + \dots  + \bpsi(n+e_2 + e_m,x) \rho^m(n+e_2 + e_m)[\rho^m(n)]^{-1},
\end{split} 
\end{equation*}
shifted back in the $e_2$ direction, what gives
\begin{small}
\begin{gather*} 
\bpsi(n,x) [C(n-e_1) - C(n-e_2)]  =  - x [\bpsi(n-e_1,x) - \bpsi(n-e_2,x)] + \\
\bpsi(n+ e_2,x) \rho^2(n+ e_2)[\rho^2(n-e_1)]^{-1} 
- \bpsi(n+ e_1,x) \rho^1(n+ e_1)[\rho^1(n-e_2)]^{-1} + \\
\bpsi(n+ e_3,x) \rho^3(n + e_3)\left( [\rho^3(n-e_1)]^{-1} 
- \rho^3(n-e_2)]^{-1} \right) + \dots \\ \dots +
\bpsi(n+ e_m,x) \rho^m(n + e_m)\left( [\rho^m(n-e_1)]^{-1} 
- \rho^m(n-e_2)]^{-1} \right) .
\end{gather*} \end{small}
Using once again the linear equations \eqref{eq:lin-d-adj-KP} but also the non-Abelian Hirota--Miwa system~\eqref{eq:rho-rho}-\eqref{eq:rho-rho-rho} we obtain
\begin{gather*} 
\bpsi(n,x) [C(n-e_1) - C(n-e_2)] \rho^2(n-e_1) [\rho^2(n)]^{-1}= \\ - x \bpsi(n,x) +
\bpsi(n+e_1,x) \rho^1(n+e_1,x) [\rho^1(n)]^{-1} + \dots + \bpsi(n+e_m,x) \rho^m(n+e_m,x) [\rho^m(n)]^{-1} =0.
\end{gather*}
Doing that for any pair of indices we get finally
\begin{equation*}
C(n-e_i) = C(n-e_j), \qquad i,j = 1, \dots , m,
\end{equation*}	
what implies that the function $C(n)=F(|n|)$ depends only on the sum of all the variables.
\end{proof}
\begin{Rem}
	When values of $F$ commute with all potentials $\rho^i$ and in the allowed range of $k$ we have $F(k)\neq 1$, then one can remove the function from the non-linear equations (including the non-Abelian Hirota--Miwa part \eqref{eq:rho-rho}-\eqref{eq:rho-rho-rho}) by the transformation
	\begin{equation}
	\rho^j(n) = \tilde{\rho}^j(n)G(|n|),
	\end{equation}
where $G(k) = \prod_{i=1}^k (1 - F(k-i))$.
\end{Rem}

\section{Conclusion and open problems}
We presented the concept of the non-commutative Hermite--Pad\'{e} problem and we gave its solution in terms of quasideterminants. In our research we were guided by the corresponding results of the classical commutative problem and its relation to the Hirota system with the Paszkowski constraint. 

The theory of non-commutative Pad\'{e} approximants has close connection with the theory of matrix orthogonal polynomials, and the theory of Hermite--Pad\'{e} approximants has close connection with the theory of multiple orthogonal polynomials. Thus the results presented in the paper provide a link between (non-existing yet) theory of matrix multiple orthogonal polynomials with integrability. 

The non-commutative Hirota system provides a way~\cite{DoliwaKashaev} to construct certain maps satisfying Zamolodchikov's tetrahedron condition~\cite{Zamolodchikov}, which is a multidimensional generalization of the more familiar Yang--Baxter maps. The new class of solutions of the system, obtained in the present paper, can be used, in principle, to produce large family of the corresponding Zamolodchikov maps.

\providecommand{\bysame}{\leavevmode\hbox to3em{\hrulefill}\thinspace}

\end{document}